\newtheorem{theorem}{Theorem}
\newtheorem{corollary}{Corollary}
\newtheorem{lemma}{Lemma}
\begin{document}

\title{A No-Go Theorem for Shaping Quantum Resources}

\author{Samuel Alperin}\email{alperin@lanl.gov}
\affiliation{\vspace{1.25mm} \mbox{Los Alamos National Laboratory, Los Alamos, New Mexico 87545, USA}}

\begin{abstract}
The ability to engineer non-Gaussian quantum resources underlies quantum technologies from communication and metrology to universal computation. However, while a number of canonical works have set no-go limits for attaining such resources from Gaussian operations, it is widely assumed that such resources can be tuned freely by non-Gaussian Hamiltonian dynamics. Here we prove a general no-go theorem for such resource shaping: no smooth Hamiltonian dynamics can modify higher-order statistical moments of a continuous-variable state without simultaneously changing its mean and covariance. This analytic constraint implies a rigidity theorem for Hamiltonian quantum control—only quadratic (symplectic) generators preserve the Gaussian moment hierarchy, while every non-quadratic term necessarily couples the Gaussian and non-Gaussian sectors. The theorem identifies the symplectic algebra as the unique invariant subalgebra whose differential representations terminate at finite (second) order within the otherwise infinite Hamiltonian algebra. It thereby defines the analytic boundary between classically simulable Gaussian dynamics and the fully universal non-Gaussian regime—the continuous-variable analogue of the Gottesman–Knill frontier. 
\end{abstract}

\maketitle

Impossibility theorems mark the boundaries of quantum mechanics, distinguishing what may be realized from what is prohibited by the very structure of nature. Bell’s theorem \cite{Bell1964,Bell1966} ruled out local hidden variables, the no-cloning theorem \cite{WoottersZurek1982} forbade universal copying, and the no-broadcasting theorem \cite{Barnum1996} established the limits of information replication. In continuous-variable (CV) physics, the celebrated Gaussian no-go theorems \cite{EisertScheelPlenio2002,Fiurasek2002,GiedkeCirac2002} showed that quadratic Hamiltonians alone cannot distill entanglement or generate Wigner negativity. Yet these results left open a critical question: once non-Gaussian dynamics are introduced, can higher statistical moments—which are both necessary and sufficient for quantum advantage—be tuned independently of the Gaussian backbone? In other words, can a Gaussian state be endowed with higher-order quantum structure without changing the fundamental nature of the state itself?

Though mathematical in nature, this question is far more than a mere curiosity: the ability to control and shape quantum states by manipulating their statistical structure underlies nearly every branch of modern quantum science. In continuous-variable systems—from optical fields to trapped ions, superconducting resonators, and collective spin ensembles—engineers routinely tune first and second moments to realize squeezing, displacement, and beam-splitter transformations. What remains elusive is independent control of higher moments, which would allow direct sculpting of the non-Gaussian “shape’’ of a state’s quantum moment distribution and thereby provide a universal resource for quantum computation, precision metrology, and quantum key distribution.

In this work, we resolve this fundamental question, rigorously proving that any smooth Hamiltonian dynamics that modify the higher quantum moments of a state \textit{necessarily} change its first two moments and thereby alter the most basic features of that state. In other words, any Hamiltonian operation that endows a Gaussian state with useful quantum resources must also change the Gaussian part of that state’s character. Furthermore, we prove that the symplectic algebra represents the \textit{unique} set of Hamiltonian operations that leave the infinite hierarchy of quantum moments invariant for every state, thereby furnishing continuous-variable quantum mechanics with a formal geometric backbone: a rigidity principle that defines its analytic structure and intrinsic limits of Hamiltonian control.

In the Weyl–Wigner representation, a Hamiltonian $H(x,p)$ generates a phase-space operator $L_H$ whose differential order equals the highest degree of $H$ in $(x,p)$. Quadratic Hamiltonians produce Fokker–Planck–type (second-order) generators, while any non-quadratic $H$ introduces third- and higher-order derivatives. This degree–derivative correspondence is representation independent (Husimi, Wigner, or $P$) and is precisely what the rigidity theorem elevates to a structural impossibility principle. We therefore resolve the question in the negative: under smooth Hamiltonian evolution, higher-order moments cannot evolve independently of the first two—the degree–derivative correspondence enforces universal coupling except for quadratic generators. Within the infinite-dimensional Poisson algebra of smooth Hamiltonians, the quadratic subalgebra is maximal with respect to finite differential order; its flows, $\mathrm{SU}(1,1)$ and $\mathrm{Sp}(2N,\mathbb{R})$, are the only ones that remain tangent to the Gaussian manifold.

This analytic viewpoint complements earlier algebraic and resource-theoretic classifications of Gaussian and non-Gaussian operations. Prior no-go results focused on specific tasks—such as entanglement distillation or state conversion—under Gaussian protocols. Here, the emphasis is structural: the present theorem concerns the analytic form of the generator itself. By identifying $\mathrm{Sp}(2N,\mathbb{R})$ as the maximal finite-differential-order subalgebra of the infinite Hamiltonian algebra, the result unifies these operational no-go theorems with a differential-geometric principle that applies across all continuous-variable platforms.

We now formalize this impossibility as a theorem that identifies the symplectic subalgebra as the sole analytic structure closing the moment hierarchy.
 Throughout, by a \emph{smooth Hamiltonian} we mean a phase-space function 
$H(q,p)$ belonging to a differentiability class sufficient to ensure that 
the associated Moyal generator admits a finite-order differential 
representation. In particular, nonquadratic contributions require 
differentiability to at least third order, and for definiteness we assume 
$H \in C^\infty$.
\begin{theorem}[Rigidity of the Moment Hierarchy]
Let $L$ be the generator of time evolution for a single bosonic mode, represented as a differential operator acting on a phase-space quasiprobability distribution $Q(\alpha,\alpha^\ast)$ or $W(\alpha,\alpha^\ast)$. Among all connected Lie flows generated by smooth Hamiltonians---finite or infinite dimensional---on $L^2(\mathbb{R})$, the quadratic (symplectic) subalgebra $\mathfrak{su}(1,1)$ and its multimode extension $\mathfrak{sp}(2N,\mathbb{R})$ are the unique nontrivial hierarchy-preserving flows, up to affine and phase redundancies. Equivalently, they are the maximal subalgebras of the infinite-dimensional algebra of smooth Hamiltonian vector fields on phase space whose differential representations terminate at second order.
\end{theorem}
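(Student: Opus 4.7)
The plan is to prove the theorem by combining a Moyal degree--derivative correspondence with a Poisson-filtration argument. I would first fix the Wigner (or Husimi) representation and write the generator of evolution as the Moyal bracket $\mathcal{L}_H W=\{H,W\}_M=\sum_{k\ge 0}c_k\hbar^{2k}\,H\,\Lambda^{2k+1}W$, with bivector $\Lambda=\overleftarrow{\partial_x}\overrightarrow{\partial_p}-\overleftarrow{\partial_p}\overrightarrow{\partial_x}$. For $H$ a polynomial of degree $d$ the series truncates, and the top-degree contribution to $\mathcal{L}_H$ is a differential operator of order $2\lfloor(d-1)/2\rfloor+1$ in $W$ (rising to $d$ in the $Q$ picture after the standard intertwiner). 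The sharp dichotomy is then that $d\le 2$ yields a bounded-order operator whose coefficients are linear polynomials in $(x,p)$, whereas $d\ge 3$ unavoidably introduces third- or higher-order derivatives; because the offending terms come with independent monomial coefficients, they cannot cancel within a fixed total degree.

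Second, I would turn the order bound into a moment-closure statement. Setting $M_{mn}=\int x^m p^n W\,dx\,dp$, differentiating in time, and integrating by parts, each factor $\partial_x^a\partial_p^b W$ in $\mathcal{L}_H W$ collapses to a moment of total degree $m+n-a-b$ weighted by the polynomial coefficient. For $d\le 2$ the resulting equations close linearly on moments of each fixed total order: first moments map to first moments (plus an affine drive from the linear part of $H$), and second moments map among themselves by the symplectic action of the quadratic part on the covariance matrix. For $d\ge 3$ the third-order derivative term produces a source involving moments of order $\ge 3$, so the first two cumulants are inextricably coupled to the higher hierarchy.

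Third, I would close the uniqueness half by the Poisson-filtration estimate $\deg\{f,g\}\le\deg f+\deg g-2$. This identity makes the space of polynomials of degree $\le 2$ into a Lie subalgebra, naturally identified with the semidirect sum of the Heisenberg algebra and $\mathfrak{sp}(2N,\mathbb{R})$; quotienting affine and central (phase) pieces leaves $\mathfrak{sp}(2N,\mathbb{R})$, which reduces to $\mathfrak{su}(1,1)\cong\mathfrak{sp}(2,\mathbb{R})$ for a single mode. Conversely, if a smooth $H$ contains a nonvanishing Taylor term of degree $d\ge 3$ at some point, then bracketing $H$ against a quadratic partner such as $x^2$ raises the leading degree by $d-2\ge 1$, and iterating generates polynomial components of arbitrarily high degree. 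Any connected Lie flow containing such an $H$ thus has unbounded differential order in its Moyal representation and, by the previous step, couples all moments.

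The main obstacle I anticipate is upgrading the escalation step from polynomial to genuinely smooth, non-analytic $H$, where the Moyal series is only formal. The cleanest remedy is to prove that the Poisson centraliser of the quadratic subalgebra within $C^\infty$ phase-space functions consists of exactly the quadratic polynomials modulo affine and constant pieces; equivalently, that $\{x^2,\cdot\}$ and $\{p^2,\cdot\}$ together detect every non-quadratic jet of $H$. Once this centraliser identity is in place, the polynomial rigidity lifts to the smooth setting without any analyticity assumption, delivering the theorem up to the stated affine and phase redundancies.
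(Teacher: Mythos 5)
Your proposal follows essentially the same route as the paper's proof: the Moyal expansion and the degree--derivative correspondence, moment closure for $d\le 2$ versus an open hierarchy for $d\ge 3$, Poisson-bracket closure of the quadratics onto $\mathfrak{sp}(2N,\mathbb{R})$ (reducing to $\mathfrak{su}(1,1)$ for one mode), and maximality via degree escalation under bracketing with $x^2$ or $p^2$. Your additions are refinements rather than a different argument --- the explicit filtration bound $\deg\{f,g\}\le\deg f+\deg g-2$, the precise differential order $2\lfloor(d-1)/2\rfloor+1$, and the caveat that for non-analytic smooth $H$ the formal Moyal series must be supplemented by a jet/centraliser argument --- all of which tighten steps the paper treats more loosely.
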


\paragraph{Proof.}
We work on the Schwartz domain—equivalently, the space of finite-energy states—so that all polynomial moments exist and the differential operators below act without restriction.
Working within the infinite-dimensional Poisson algebra
\[
C^\infty(\mathbb{R}^2),\quad \{H_1,H_2\}=\partial_x H_1\,\partial_p H_2-\partial_p H_1\,\partial_x H_2,
\]
whose elements generate smooth Hamiltonian vector fields $v_H=\Omega\nabla H$. Under the Weyl transform, the commutator with $H$ acts on the Wigner function as
\begin{equation}
L_H W \;=\; 2H\,\sin\!\left[\frac{\hbar}{2}\Big(\overleftarrow{\partial_x}\overrightarrow{\partial_p}-\overleftarrow{\partial_p}\overrightarrow{\partial_x}\Big)\right] W. \label{eq:LH}
\end{equation}
Expanding the sine gives the Moyal series
\begin{align}
L_H W \;=\; \{H,W\}_{\mathrm{PB}}
+\sum_{k=1}^{\infty}\frac{(-1)^k}{(2k+1)!}\left(\frac{\hbar}{2}\right)^{2k}\partial^{2k+1}H\,\partial^{2k+1}W. \label{eq:Moyal}
\end{align}
For a smooth $H$, this defines a differential operator of generally infinite order. The series truncates at second order if and only if all derivatives of $H$ above quadratic vanish---i.e., if $H$ is a quadratic polynomial in $x$ and $p$. Hence, the subspace of quadratic $H$ generates differential operators of order $\le 2$ (Fokker--Planck type), and any $H$ containing cubic or higher terms contributes third- or higher-order derivatives, thereby coupling the evolution of the first two moments to higher cumulants.

The set of quadratic $H$ is closed under the Poisson bracket:
\[
\{x^2,p^2\}=4xp,\qquad \{xp,p^2\}=2p^2,\qquad \{xp,x^2\}=-2x^2,
\]
forming the Lie algebra $\mathrm{sp}(2,\mathbb{R})\simeq \mathrm{su}(1,1)$. Within the full infinite-dimensional algebra of smooth Hamiltonian vector fields, this quadratic subalgebra is maximal with respect to the property that its Weyl-mapped generators are finite-order differential operators (order $\le 2$). Consequently, the flows generated by $\mathrm{su}(1,1)$ (and by $\mathrm{sp}(2N,\mathbb{R})$ in the multimode case) are the only connected Lie flows---finite or infinite-dimensional---that preserve the entire hierarchy of normalized moments, completing the proof.

\paragraph{Remarks on the proof.}
For clarity, the coupling of moments can be made explicit. Acting on monomials $(\alpha^\ast)^p\alpha^q$ one finds
\begin{equation}
L_H[(\alpha^\ast)^p\alpha^q] \;=\; \sum_{m+n\le d} C^{pq}_{mn}\,(\alpha^\ast)^{p-m}\alpha^{q-n}, \label{eq:mon}
\end{equation}
where the coefficients $C^{pq}_{mn}$ depend on derivatives of $H$ up to total order $m{+}n$. The time derivative of $M_{p,q}$ thus couples to all $M_{p',q'}$ with $p'{+}q'\le p{+}q{+}d$. For $d\le2$, the system of equations for the first two moments ($p{+}q\le2$) closes; for any $d>2$ the hierarchy is open. Maximality of the quadratic algebra follows from the fact that for any $H$ of degree $\ge 3$ the commutator $[H,x^2]$ or $[H,p^2]$ contains monomials of higher degree, so the quadratic subspace cannot be enlarged without generating higher-order terms under commutation.

We emphasize that the theorem concerns \emph{dynamical invariance} of the 
first and second moments under Hamiltonian evolution. While any 
non-Gaussian state can be associated with a Gaussian state sharing the 
same mean and covariance, such post hoc matching does not imply that a 
nonquadratic Hamiltonian preserves these moments along its dynamical flow. 
For cubic Hamiltonians (e.g., $H \propto q^3$), the time derivatives of 
the first and second moments are generically nonzero, reflecting the 
structural coupling of the moment hierarchy established above.

In essence, the proof shows that any smooth non-quadratic Hamiltonian inevitably introduces third- and higher-order differential terms in the phase-space generator. These higher derivatives enforce universal coupling between the Gaussian and non-Gaussian sectors, making the symplectic flows the only analytic “tangent directions’’ that remain within the Gaussian manifold.

\begin{corollary}[No independent control of higher cumulants]
For any connected Lie flow generated by smooth Hamiltonians on phase space, the evolution of higher normalized moments is not independent of the first two. Equivalently, no Hamiltonian generator---finite or infinite dimensional---can alter a higher cumulant while leaving mean and covariance invariant for all states.
\end{corollary}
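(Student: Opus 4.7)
The plan is to derive the corollary from Theorem~\ref{thm:rigidity} by contradiction. Suppose a smooth Hamiltonian $H$ generates a flow that alters some higher cumulant $\kappa_n$ ($n\ge 3$) for at least one state while simultaneously preserving both mean and covariance on every state.

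First, I would recognize that trivial preservation of the first two moments is a degenerate instance of the hierarchy-preservation considered in Theorem~\ref{thm:rigidity}: the evolution equations $\dot M_{p,q}=0$ for $p+q\le 2$ form a closed subsystem (with vanishing right-hand side). By the theorem, any such flow must be generated by a Hamiltonian in the quadratic subalgebra $\mathfrak{sp}(2,\mathbb R)$, so $H$ is at most quadratic in $(x,p)$ up to additive and affine redundancies.

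Second, within this quadratic class I would exploit the fact that on linear observables the Moyal bracket reduces to the Poisson bracket, so
\begin{equation}
\frac{d}{dt}\langle\hat x\rangle=\langle\partial_p H\rangle,\qquad
\frac{d}{dt}\langle\hat p\rangle=-\langle\partial_x H\rangle.
\end{equation}
Demanding that these vanish on every state---equivalently on a dense family such as coherent states---forces the phase-space symbols $\partial_p H$ and $\partial_x H$ to vanish pointwise. Hence $H$ reduces to a constant, generating only a global phase $e^{-ict/\hbar}\mathbf 1$, which cannot alter any cumulant; this contradicts the hypothesis on $\kappa_n$ and closes the argument. The ``equivalently'' clause in the corollary then follows as the contrapositive: any nontrivial smooth $H$ must shift at least one first-moment or covariance element for some state, so a change in a higher cumulant necessarily drags along a change in the Gaussian backbone.

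The main obstacle, I expect, is the density step rather than the algebra: one must ensure that the vanishing of $\langle\partial_p H\rangle$ on a spanning family of physical states genuinely forces $\partial_p H\equiv 0$ as a smooth function on phase space. On the Schwartz (finite-energy) domain invoked in the Remarks this is secured by the completeness of coherent-state resolutions of the identity together with the injectivity of the Weyl symbol map, but it should be flagged explicitly to keep the argument airtight over the infinite-dimensional algebra of smooth Hamiltonians.
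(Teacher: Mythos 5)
Your proposal is correct, but it reaches the conclusion by a genuinely different route than the paper. The paper's own proof stays with the coupling mechanism: a non-quadratic $H$ injects derivatives of order $\ge 3$ into $\mathcal L_H$ via the Moyal series, these open the moment hierarchy, and therefore any change in a higher cumulant drags along a change in $m_1$ or $m_2$ for some state. You instead run a reductio that terminates in triviality: invariance of mean and covariance on all states first forces $H$ into the quadratic algebra, and then the exact Ehrenfest relations $\tfrac{d}{dt}\langle\hat x\rangle=\langle\partial_p H\rangle$, $\tfrac{d}{dt}\langle\hat p\rangle=-\langle\partial_x H\rangle$, evaluated on a spanning family of states, force $\partial_x H=\partial_p H\equiv 0$, so $H$ is a constant and the flow is a global phase that cannot alter any cumulant. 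This buys a strictly stronger (and cleaner) conclusion than the paper records---the \emph{only} Hamiltonian flow fixing mean and covariance on every state is the trivial one---and it handles the ``for all states'' quantifier explicitly, which the paper's proof of this corollary leaves implicit (the genericity argument only appears later, in the proof of Theorem~\ref{thm:uniqueness}). Two refinements: (i) because $x$ and $p$ are linear, your Ehrenfest relations hold exactly for \emph{arbitrary} smooth $H$ (the Moyal bracket with a linear symbol truncates at the Poisson term), so your second step alone already yields $H=\mathrm{const}$ and the preliminary appeal to Theorem~\ref{thm:rigidity} is dispensable; note also that the quadraticity claim you borrow is really the content of Corollary~\ref{cor:preservation}/Theorem~\ref{thm:uniqueness} rather than of Theorem~\ref{thm:rigidity} as stated. (ii) The density step you flag is immediate in the quadratic case, since $\partial_p H$ is then affine and coherent states with arbitrary centroids suffice; in general it follows from injectivity of the Weyl map on tempered symbols, as you say. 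The one thing your contrapositive does not deliver---and the paper's coupling argument does, at least heuristically---is the mechanism behind the correlated variations ($\Delta m_4\propto\Delta m_2$) emphasized elsewhere in the text; your argument proves the no-go but not the structure of the coupling.
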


\paragraph{Proof.}
Within the full Poisson algebra $C^\infty(\mathbb{R}^2)$, the generator $L_H$ acts as in Eq.~(\ref{eq:LH}). If $H$ includes cubic or higher terms, Eq.~(\ref{eq:Moyal}) shows that $L_H$ contains derivative terms of order $\ge 3$, which couple lower and higher moments in the hierarchy. Therefore any change in a higher cumulant necessarily induces a correlated change in at least one of the first two normalized moments for some physical state. Only quadratic $H$ yield generators of differential order $\le 2$ and can evolve the first two moments independently of higher ones.

\begin{corollary}[Preservation of mean and variance implies full hierarchy preservation]
Suppose a connected Lie flow generated by a smooth Hamiltonian preserves the first two normalized Husimi or Wigner moments $m_1,m_2$ for all physical states. Then every generator in its Lie algebra must be quadratic in $(x,p)$, and the flow coincides (up to displacements and global phases) with the $\mathrm{SU}(1,1)$ or, in multimode form, $\mathrm{Sp}(2N,\mathbb{R})$ evolution that preserves the entire moment hierarchy.
\end{corollary}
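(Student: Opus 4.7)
The plan is to derive from the preservation hypothesis the analytic constraint that $H$ be polynomial of degree at most two in $(x,p)$, and then to invoke the Lie-algebraic closure already established in Theorem~\ref{thm:rigidity} to identify the resulting flow with $\SU(1,1)$ or $\Sp(2N,\mathbb R)$ modulo the displacement and phase redundancies the statement permits. I read the preservation hypothesis in the hierarchical sense: the equations of motion for $m_1, m_2$ close on themselves, with $\dot m_1, \dot m_2$ depending only on $(m_1, m_2)$ and not on any higher cumulant.

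The first step is to notice that for any polynomial $f$ of degree $\le 2$ in $(x,p)$, the Moyal bracket $\{f,H\}_M$ coincides with the ordinary Poisson bracket $\{f,H\}_{\mathrm{PB}}$ exactly: every $k\ge 1$ term of the expansion in~\eqref{eq:moyalseries} carries $2k+1\ge 3$ derivatives on the $f$-side of the bidirectional expansion, and these annihilate anything quadratic or below. Consequently, for the test symbols $f\in\{x,p,x^2,p^2,xp+px\}$ one has $\dot{\langle f\rangle}=\langle\{f,H\}_{\mathrm{PB}}\rangle$ \emph{exactly}, independent of $\hbar$. The hypothesis then demands that, for every physical state, $\langle\{f,H\}_{\mathrm{PB}}\rangle$ be a function of $(m_1,m_2)$ alone. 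A separation argument on the Schwartz domain---two states sharing $(m_1,m_2)$ but differing in some higher cumulant distinguish any non-quadratic phase-space symbol---then forces $\{f,H\}_{\mathrm{PB}}$ itself to be a polynomial of degree $\le 2$. Applied to $f=x^2$ this gives $2x\,\partial_p H$ of degree $\le 2$, hence $\partial_p H$ affine; the analogous step with $f=p^2$ forces $\partial_x H$ affine; together, $H$ is polynomial of degree $\le 2$.

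Once $H$ has been reduced to quadratic form, the remaining identification is purely algebraic. Decomposing $H$ as constant $+$ linear $+$ strictly quadratic, the constant integrates to a global phase, the linear piece to a Heisenberg--Weyl displacement, and the strictly quadratic piece sits inside $\mathfrak{sp}(2N,\mathbb R)\simeq\mathfrak{su}(1,1)$ via the Poisson-bracket closure already verified in the proof of Theorem~\ref{thm:rigidity}; quotienting by the explicitly permitted phase and displacement redundancies leaves precisely the $\SU(1,1)$ or $\Sp(2N,\mathbb R)$ flow advertised. The main obstacle is the separation step: one must exhibit a test family rich enough that no nontrivial polynomial symbol of degree $\ge 3$ can produce an expectation value expressible as a function of $(m_1, m_2)$ alone. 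The moment-determinacy noted in the domain remark following Theorem~\ref{thm:rigidity} supplies what is needed, and concrete separators---for instance displaced Fock states or photon-added Gaussians with matched first and second moments but distinct $\langle x^3\rangle$---can be written down explicitly; once such a family is fixed, ruling out accidental cancellations among homogeneous components of $H$ is essentially bookkeeping.
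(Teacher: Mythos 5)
Your proof is correct in substance but takes a genuinely different---and in one important respect sharper---route than the paper's. The paper argues that a degree-$d\ge3$ term in $H$ injects order-$d$ derivatives into $\mathcal L_H$ via Eq.~\eqref{eq:moyalseries}, and that these derivatives ``generate nonzero contributions to $\dot m_1$ or $\dot m_2$.'' You instead push the Moyal bidifferential operator onto the observable side and note that for any test symbol $f$ of degree $\le 2$ the $k\ge1$ terms die identically, so $\dot{\langle f\rangle}=\langle\{f,H\}_{\mathrm{PB}}\rangle$ exactly; the failure of closure is then traced to the \emph{classical} bracket $\{x^2,H\}_{\mathrm{PB}}=2x\,\partial_pH$ (and its $p^2$ partner) having degree $>2$, which forces $\nabla H$ affine and hence $H$ quadratic. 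This is the more accurate mechanism: the higher Moyal corrections are actually invisible to quadratic test functions, so the coupling of $m_2$ to third moments comes from the Poisson term, not from the $\hbar$-corrections the paper points to---your version buys a correct attribution and an explicit derivation of the degree bound, where the paper's proof asserts genericity. You also read ``preserves $m_1,m_2$'' as closure of the low-order moment equations rather than literal invariance, which is the charitable reading (literal invariance of the variance would exclude squeezing itself); the paper's proof uses the literal reading. Both arguments share the same residual gap, namely the separation lemma that two physical states with matched first and second moments can distinguish any non-quadratic symbol---you flag it honestly and name concrete separators (displaced Fock states, photon-added Gaussians), whereas the paper disposes of it with a ``nowhere dense'' remark in Theorem~\ref{thm:uniqueness}; note only that since $H$ is merely smooth, your separation step must handle arbitrary smooth (not just polynomial) symbols $2x\,\partial_pH$, which requires a density argument in the annihilator of the degree-$\le2$ polynomials rather than pure bookkeeping. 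The final algebraic identification modulo displacements and phases matches the paper's appeal to Theorem~\ref{thm:rigidity}, and your explicit constant/linear/quadratic decomposition is if anything more careful.
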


\begin{proof}
If $H$ contained any term of total degree $d\ge3$, its Weyl image would include derivatives of order $d$ in Eq.~(\ref{eq:Moyal}). Acting on a generic $W(x,p)$, those derivatives generate nonzero contributions to $\dot m_1$ or $\dot m_2$, violating their invariance. Hence all generators that preserve mean and variance uniformly must satisfy $d\le 2$ and therefore lie in the quadratic algebra. By Theorem~1, the quadratic subalgebra is $\mathrm{su}(1,1)$ (and $\mathrm{sp}(2N,\mathbb{R})$ in the multimode case), which uniquely preserves the full hierarchy of normalized moments.
\end{proof}

\begin{lemma}[Representation-independence of second-order truncation]
Let $Q=G_\sigma * W$ be the Husimi function obtained from the Wigner function
by Gaussian convolution with a fixed width $\sigma>0$.
If the Hamiltonian generator $L_H$ acts on $W$ as a differential operator of order
$\le m$ (in particular, $\le2$) with polynomial coefficients, then there exists a
differential operator $\widetilde L_H$ of order $\le m$ such that
$\partial_t Q=\widetilde L_H Q$.
Hence the property “generator has order $\le2$’’ is independent of choosing the
Wigner or Husimi representation (and likewise for any fixed Gaussian smoothing).
\end{lemma}

\begin{proof}
Work on the Schwartz space $\mathcal{S}(\mathbb{R}^2)$ of test functions, where
all operations below are exact.
Let $S_\sigma:\mathcal{S}\to\mathcal{S}$ denote convolution by the centered
Gaussian $G_\sigma$ (the Weierstrass transform), so $Q=S_\sigma W$.
It is standard that $S_\sigma$ is a topological automorphism of $\mathcal{S}$
with inverse $S_\sigma^{-1}$ given by convolution with a tempered distribution
whose Fourier transform is $e^{+\frac{\sigma}{2}\|k\|^2}$; in particular,
$S_\sigma$ and $S_\sigma^{-1}$ preserve $\mathcal{S}$.

Assume $L_H$ acts on $W$ as a finite-order differential operator
\[
L_H=\sum_{|\alpha|\le m} a_\alpha(x,p)\,\partial^\alpha,
\qquad a_\alpha\in\mathbb{R}[x,p],
\]
with multi-index notation $\partial^\alpha=\partial_x^{\alpha_1}\partial_p^{\alpha_2}$.
Define the conjugated operator
\[
\widetilde L_H \;:=\; S_\sigma\,L_H\,S_\sigma^{-1}.
\]
Then, for any $Q\in\mathcal{S}$ and $W=S_\sigma^{-1}Q$,
\[
\widetilde L_H Q
= S_\sigma L_H (S_\sigma^{-1}Q)
= S_\sigma(L_H W),
\]
whence $\partial_t Q=\widetilde L_H Q$ whenever $\partial_t W=L_H W$.

It remains to prove that $\widetilde L_H$ is still a differential operator of
order $\le m$.
Two basic identities suffice:

(i) **Derivatives commute with fixed Gaussian convolution.**
For any multi-index $\alpha$,
\[
S_\sigma\,\partial^\alpha=\partial^\alpha\,S_\sigma
\quad\text{and}\quad
S_\sigma^{-1}\,\partial^\alpha=\partial^\alpha\,S_\sigma^{-1}.
\]
Indeed, differentiation passes through convolution with a fixed kernel.

(ii) **Conjugation of multiplication is order \(0\).**
For a polynomial (indeed, smooth) $a(x,p)$, let $M_a$ be multiplication by $a$.
Then $T_a:=S_\sigma M_a S_\sigma^{-1}$ is an integral operator with a
\emph{smooth} kernel:
\begin{align}
(T_a f)(z)=&\int_{\mathbb{R}^2} K_a(z,z')\,f(z')\,dz',\\
\quad
K_a(z,z')=&\int_{\mathbb{R}^2} G_\sigma(z-y)\,a(y)\,K_\sigma(y-z')\,dy,
\end{align}
where $K_\sigma$ is the (tempered) kernel of $S_\sigma^{-1}$.
Since $G_\sigma$ is Schwartz and $a$ is polynomial, $K_a$ is smooth
(with at most polynomial growth), so $T_a$ is a pseudodifferential operator of
order $0$ (it does not introduce derivatives of $f$).

Using (i) and (ii), write
\begin{align}
\widetilde L_H
=& S_\sigma\!\left(\sum_{|\alpha|\le m} M_{a_\alpha}\,\partial^\alpha\right)\!S_\sigma^{-1}\\
=& \sum_{|\alpha|\le m} \underbrace{(S_\sigma M_{a_\alpha} S_\sigma^{-1})}_{T_{a_\alpha}}
\,\underbrace{(S_\sigma\partial^\alpha S_\sigma^{-1})}_{\partial^\alpha}.
\end{align}
Each factor $T_{a_\alpha}$ is order $0$, and each $\partial^\alpha$ is order
$|\alpha|$. Therefore every summand has differential order $|\alpha|$, and hence
$\widetilde L_H$ is a differential operator of order
\(\le \max_{|\alpha|\le m}|\alpha|=m\).
In particular, if $L_H$ has order $\le2$, then $\widetilde L_H$ has order $\le2$.

Thus the second-order truncation property (and more generally the order bound
$m$) is preserved under fixed Gaussian smoothing; i.e., it is representation
independent between Wigner, Husimi, or any Gaussian-smoothed phase-space
representation. 
\end{proof}

\noindent \textit{Analytic and Geometric Context --}
The Moyal series, Eq.~(\ref{eq:Moyal}), is the phase-space analogue of the semiclassical expansion of the von Neumann equation. Truncation at second order corresponds to the Gaussian (quadratic) Hamilton--Jacobi regime, in which quantum corrections appear only through linear drift and diffusion. Geometrically, the manifold of Gaussian states is the orbit of the vacuum under $\mathrm{Sp}(2N,\mathbb{R})$; the rigidity theorem identifies it as the flat invariant submanifold singled out by second-order truncation.

Beyond its geometric interpretation, the rigidity theorem also provides a natural bridge between phase-space analysis and symplectic geometry -- 
the second-order truncation condition corresponds to the vanishing of all third and higher-order terms in the Moyal tensor hierarchy,
so that the Gaussian manifold forms a flat symplectic submanifold endowed with a covariantly constant metric.
Non-quadratic Hamiltonians introduce curvature proportional to the third derivatives of $H(x,p)$, giving a quantitative measure of departure from symplectic flatness.
This curvature viewpoint connects analytic rigidity to the language of differential geometry and deformation quantization.

\noindent \textit{Open dynamics -- }
Quadratic Gaussian channels generated by linear Lindblad operators yield second-order phase-space generators and therefore preserve covariance closure. Nonlinear Lindblad operators (e.g., $L\propto \hat a^\dagger \hat a$ or $\hat a^2$ with nonlinear coefficients) introduce higher-order derivatives in the Kramers--Moyal expansion and generically break hierarchy preservation. Thus, within Markovian models, the Hamiltonian/closed-system classification extends verbatim: only Gaussian (affine-symplectic) generators sustain second-order truncation.

\noindent \textit{Geometric Interpretation --} The Gaussian manifold can be visualized as a flat symplectic surface embedded in the full infinite-dimensional phase space of quasiprobability distributions. Quadratic Hamiltonians generate vector fields tangent to this surface, producing area-preserving linear transformations that slide states along it without distortion. Any non-quadratic term introduces curvature, producing components that point out of the surface and thus mix Gaussian and non-Gaussian directions. The rigidity theorem therefore identifies the Gaussian manifold as a flat symplectic leaf and the transition to non-Gaussian dynamics as a geometric bending of this leaf into higher-order structure. 

\noindent \textit{Illustrative example -- }
Consider $\rho=\lvert 0\rangle$ (vacuum). Under a cubic-phase gate $U=\exp(i\gamma \hat x^3)$, the variance $\langle \hat x^2\rangle$ acquires corrections $\sim \gamma^2$, while the normalized kurtosis $m_4$ changes at order $\gamma$ (odd central moments vanish at $t{=}0$). These variations cannot be disentangled: any nonzero $\Delta m_4$ entails a correlated $\Delta m_2$. In contrast, an $\mathrm{SU}(1,1)$ squeezing transformation rescales $\langle \hat x^2\rangle$ but leaves $m_4$ invariant, illustrating the rigidity principle.

The preceding argument establishes that rigidity and hierarchy preservation coincide for quadratic dynamics. The next theorem strengthens this to a uniqueness statement: any smooth flow that leaves the mean and variance invariant for all states must coincide with the symplectic evolution itself.

\begin{theorem}[Uniqueness of hierarchy-preserving flows]
Among all connected Lie flows generated by smooth Hamiltonians on $L^2(\mathbb{R})$, the quadratic (symplectic) subalgebra $\mathfrak{su}(1,1)$---and, in the multimode case, its extension $\mathfrak{sp}(2N,\mathbb{R})$---is the unique nontrivial algebra whose flow preserves the first two normalized Husimi or Wigner moments $m_1,m_2$ for all physical states. Consequently, any flow that preserves mean and variance for all states automatically preserves the entire moment hierarchy.
\end{theorem}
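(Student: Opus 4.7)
The plan is to derive Theorem~\ref{thm:uniqueness} from Theorem~\ref{thm:rigidity} and Corollary~\ref{cor:preservation} through a three-step reduction: (i) show that preservation of $m_1, m_2$ across all physical states forces every Hamiltonian generator in the candidate Lie algebra to have total degree at most two in $(x,p)$; (ii) invoke Poisson closure of the quadratic subspace to identify the resulting algebra as $\mathfrak{su}(1,1)$ (or $\mathfrak{sp}(2N,\mathbb R)$ in the multimode case); and (iii) verify that this algebra automatically controls every higher cumulant of Gaussian states, yielding the ``consequently'' clause.

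For step (i), I would start from the Weyl-transformed Ehrenfest equations obtained by integrating $\mathcal L_H W$, in the form of Eq.~\eqref{eq:moyalseries}, against the test monomials $x$ and $x^2$ followed by integration by parts. For $d\le 2$ the resulting right-hand sides close on $m_1, m_2$, while for $d\ge 3$ the Moyal tail contributes expectations of polynomials of degree $\ge 3$ in $(x,p)$---genuine higher-moment couplings. These couplings assemble into a coefficient system indexed by the Taylor expansion of $H$ at orders $\ge 3$, and invariance of $m_1, m_2$ on all physical states forces each of those coefficients to vanish.

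The main obstacle is to rule out ``accidental'' cancellations: a carefully chosen non-quadratic $H$ might in principle leave $\dot m_1, \dot m_2$ zero on some distinguished state. I plan to close this gap by testing $\mathcal L_H$ on a parametric witness family---coherent states $|\beta\rangle$, whose Husimi functions are Gaussians centered at $\beta$---so that the higher-moment couplings appear as polynomials in $(\beta,\beta^*)$. Identical vanishing across the continuum of coherent parameters forces every Taylor coefficient of $H$ of order $\ge 3$ to vanish pointwise, giving the quadratic condition; Schwartz density then extends the conclusion from coherent witnesses to all physical states, as in Corollary~\ref{cor:preservation}. Once every generator is confirmed quadratic, step (ii) follows immediately from the Poisson bracket table in the proof of Theorem~\ref{thm:rigidity}, which closes the quadratic subspace into $\mathfrak{sp}(2,\mathbb R)\simeq\mathfrak{su}(1,1)$, and $\mathfrak{sp}(2N,\mathbb R)$ in general. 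Step (iii) is then automatic: $\mathrm{Sp}(2N,\mathbb R)$ stabilizes the Gaussian orbit of the vacuum, on which the full normalized moment hierarchy is rigidly determined by $m_1, m_2$, so preservation of the lower moments propagates up the cumulant tower, completing the plan.
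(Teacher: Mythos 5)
Your proposal follows the same skeleton as the paper's proof---(i) any degree-$\ge 3$ term in $H$ feeds higher moments into $\dot m_1,\dot m_2$, so uniform invariance forces $H$ quadratic; (ii) Poisson closure identifies $\mathfrak{sp}(2,\mathbb R)\simeq\mathfrak{su}(1,1)$; (iii) the resulting flow preserves the full hierarchy---but it differs at the two points where the real work sits. For the genericity step the paper merely asserts that for any nonzero polynomial $P$ there is a Schwartz-class $W$ with prescribed low-order moments and $\int P\,W\neq0$, the exceptional set being nowhere dense; your coherent-state witness family replaces this with a constructive argument: $\dot m_1,\dot m_2$ evaluated on $\ket{\beta}$ are Gaussian-smoothed polynomials in $(\beta,\beta^*)$ whose top-degree coefficients are the top-degree Taylor coefficients of $H$, so identical vanishing over the continuum of $\beta$ kills those coefficients order by order. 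That is a genuinely sharper way to exclude the ``accidental cancellation'' loophole than the paper's density assertion. Where your plan is weaker is step (iii): you derive hierarchy preservation from the fact that $\Sp(2N,\bbR)$ stabilizes the Gaussian orbit of the vacuum, on which higher normalized moments are functions of $m_1,m_2$---but the theorem claims preservation of the entire hierarchy for \emph{all} physical states, including non-Gaussian ones whose higher cumulants are not determined by the first two. The paper instead closes this step by noting that quadratic generators truncate at second differential order in Eq.~\eqref{eq:moyalseries} (equivalently, act affinely on phase space), so the normalized moments of \emph{every} state transform covariantly; you should invoke Theorem~\ref{thm:rigidity} or Corollary~\ref{cor:preservation} at this point rather than rigidity of the Gaussian orbit, which proves only the special case.
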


\begin{proof}
We give a complete proof. The single-mode case suffices; the multimode extension follows identically.

Let $\mathcal{P}_{\le2}$ denote the real polynomials in $(x,p)$ of total degree at most~2,
\[
\mathcal{P}_{\le2}=\mathrm{span}\{1,x,p,x^2,xp,p^2\}.
\]
For any smooth test function $f$ and Wigner density $W$, write
\[
\frac{d}{dt}\langle f\rangle=\!\int\! f\,(\partial_tW)\,dx\,dp
   =\!\int\!(L_H^\dagger f)\,W\,dx\,dp,
\]
where $L_H$ is the Wigner-space generator of Eq.~(\ref{eq:LH}) and
$L_H^\dagger$ its distributional adjoint.  Hence the evolution of the first two
moments is closed for \emph{all} states iff
\begin{equation}
L_H^\dagger(\mathcal{P}_{\le2})\subseteq \mathcal{P}_{\le2}.
\label{eq:closure}
\end{equation}

\begin{lemma}
    [Moyal truncation on $\mathcal{P}_{\le2}$).]
For $f\in\mathcal{P}_{\le2}$, $L_H^\dagger f=\{f,H\}_{\mathrm{PB}}$.
\end{lemma}

\begin{proof}
The adjoint action is $L_H^\dagger f
   =2\,\sin\!\big[\tfrac{\hbar}{2}\Lambda\big](f,H)$
with $\Lambda=\overleftarrow{\partial_x}\overrightarrow{\partial_p}
      -\overleftarrow{\partial_p}\overrightarrow{\partial_x}$.
Expanding,
\[
L_H^\dagger f
   =\sum_{k\ge0}\frac{(-1)^k}{(2k{+}1)!}
     \Big(\tfrac{\hbar}{2}\Big)^{2k}\Lambda^{2k+1}(f,H).
\]
Each $\Lambda$ differentiates both arguments once;
thus $\Lambda^{m}(f,H)$ contains $m$-th derivatives of~$f$.
For $f$ of degree~$\le2$, $\partial^\alpha f=0$ for $|\alpha|\!\ge\!3$,
so $\Lambda^{2k+1}(f,H)=0$ for $2k{+}1\!\ge\!3$.
Only the $k=0$ term survives, yielding
$L_H^\dagger f=\Lambda(f,H)=\{f,H\}_{\mathrm{PB}}$.
\end{proof}

\begin{lemma}[Closure $\Rightarrow$ degree constraint)]
If Eq.~(\ref{eq:closure}) holds, $H$ is at most quadratic in $(x,p)$
(up to affine and constant terms).
\end{lemma}

\begin{proof}
From Lemma 2,
$L_H^\dagger x=\{x,H\}_{\mathrm{PB}}=\partial_pH$ and
$L_H^\dagger p=-\partial_xH$.
Closure demands $L_H^\dagger x,L_H^\dagger p\in\mathcal{P}_{\le1}$,
so $\partial_xH$ and $\partial_pH$ are at most linear.
Integrating gives a quadratic $H$ plus affine/constant parts,
corresponding to phase and displacement freedoms.
\end{proof}

\begin{lemma}[Quadratic $H$ $\Rightarrow$ closure and uniqueness).]
If $H$ is quadratic, $L_H^\dagger(\mathcal{P}_{\le2})\subseteq\mathcal{P}_{\le2}$;
furthermore, these and only these generators form the symplectic Lie algebra
$\mathfrak{sp}(2,\mathbb{R})\!\simeq\!\mathfrak{su}(1,1)$
(up to affine/phase terms).
\end{lemma}

\begin{proof}
Write $H=\tfrac12\bm{R}^{\mathsf T}K\bm{R}+\bm{h}^{\mathsf T}\bm{R}+h_0$
with $\bm{R}=(x,p)^{\mathsf T}$ and $K=K^{\mathsf T}$.
Then $\{\,\cdot\,,H\}_{\mathrm{PB}}$ acts linearly on $\bm{R}$ and closes on
$\mathcal{P}_{\le2}$.
Quadratic forms obey
$\{x^2,p^2\}=4xp$, $\{xp,p^2\}=2p^2$, $\{xp,x^2\}=-2x^2$,
generating $\mathfrak{sp}(2,\mathbb{R})$.
If a connected Lie subalgebra strictly enlarges this quadratic one,
it contains a cubic monomial~$H_3$; then
$L_{H_3}^\dagger x=\partial_pH_3$ (quadratic) and
$L_{H_3}^\dagger(x^2)=2x\,\partial_pH_3$ (cubic),
violating~(\ref{eq:closure}).
\end{proof}

\textbf{Completion of the proof.}
If the first two moments evolve as a closed subsystem for all states,
Eq.~(\ref{eq:closure}) holds.
Lemma 1 reduces $L_H^\dagger$ to the Poisson bracket on $\mathcal{P}_{\le2}$;
Lemma 2 then forces $H$ to be quadratic (plus affine/constant terms);
Lemma 3 shows these—and only these—preserve the hierarchy.
Hence the connected flow is uniquely the quadratic symplectic group
$\mathrm{SU}(1,1)$ (or, for $N$ modes, $\mathrm{Sp}(2N,\mathbb{R})$)
up to displacements and phase.
Any larger connected algebra introduces a cubic term and breaks closure.
\end{proof}

\noindent\textit{Interpretation --}
Theorem~2 establishes a precise equivalence between three seemingly
different statements about smooth Hamiltonian dynamics:
(i) the closure of the system of equations for the first and second statistical moments;
(ii) the invariance of the finite polynomial space $\mathcal{P}_{\le2}$ under the adjoint generator $L_H^\dagger$; and
(iii) the quadratic form of the Hamiltonian.
This equivalence is not an approximation or semiclassical limit:
it follows exactly from the structure of the Moyal series and
holds for all $\hbar>0$.
Formally, the hierarchy of normalized moments constitutes an infinite tower of coupled ordinary differential equations in expectation values.
Theorem~2 states that the only smooth Hamiltonian vector fields on phase space that leave the lower two tiers of this tower self-contained are those generated by quadratic forms.
Every smooth non-quadratic term creates a non-vanishing component of $L_H^\dagger f$
orthogonal to $\mathcal{P}_{\le2}$,
thereby enforcing coupling to higher cumulants.
Physically, this means that no Hamiltonian evolution can modify higher-order statistics
without simultaneously perturbing the mean and covariance:
the Gaussian manifold is an analytically rigid submanifold of the full state space.
The symplectic algebra $\mathfrak{sp}(2N,\mathbb{R})$ thus plays the same role
in continuous-variable mechanics that the Clifford algebra plays in discrete quantum computation:
it is the maximal finite-order structure whose action preserves the algebraic closure of observables
and admits efficient classical simulation.

The obstruction identified here is structural rather than parametric. 
Gaussian dynamics forms the unique nontrivial subspace with a closed 
moment hierarchy under Hamiltonian evolution. The inclusion of any 
higher-degree term—regardless of the number of independent nonlinear 
parameters—introduces higher-order differential contributions to the 
generator, thereby coupling lower and higher moments. Independent tuning 
of higher-order statistics cannot therefore be achieved merely by 
enlarging the parameter set.

\noindent\textit{Multimode extension and commutator curvature -- }
For $N$ modes with canonical vector $R=(x_1,p_1,\ldots,x_N,p_N)^{\mathsf T}$, quadratic generators are $\tfrac12 R^{\mathsf T} H R$ with $H{=}H^{\mathsf T}$. Their unitary Wigner evolution is $\partial_t W = -\nabla\!\cdot\!(A R\, W)$ with $A\in \mathfrak{sp}(2N,\mathbb{R})$. Non-quadratic terms generate third- and higher-order mixed derivatives $\partial_R^\alpha$, coupling all cross-moments and breaking covariance closure. Thus the unique hierarchy-preserving connected group is $\mathrm{Sp}(2N,\mathbb{R})$, exactly as in the single-mode case. Geometrically, rigidity can also be phrased in terms of curvature generated by commutators:
\begin{equation}
[L_{H_1},L_{H_2}] \;=\; L_{\{H_1,H_2\}}, \label{eq:curv}
\end{equation}
whose differential order increases additively with the degrees of $H_1$ and $H_2$. Only the quadratic subspace closes under this bracket, restating Theorem~1 in geometric form.

\noindent\textit{Experimental Signatures and Diagnostics -- }
Optical implementations of a cubic-phase Hamiltonian with $\gamma \sim 10^{-2}$ correspond to electro-optic modulation voltages below $1$ V, while microwave-cavity realizations achieve comparable nonlinear strengths through Josephson junctions or flux-biased SQUID arrays. Heterodyne tomography with $10^5$--$10^6$ samples resolves the variance to better than $1\%$ and the kurtosis to $\sim 10\%$, sufficient to detect the correlated changes predicted by rigidity. Observation of a linear scaling $\Delta m_4 \propto \Delta m_2$ under weak cubic driving would confirm the universal coupling of low- and high-order moments, whereas pure $\mathrm{SU}(1,1)$ squeezers must exhibit $\Delta m_4 = 0$ for all $\Delta m_2$. These correlated fluctuations can also serve as diagnostic signatures of unwanted anharmonicity in superconducting resonators or nonlinear optical cavities. In optical platforms, extracting normalized cumulants can be done directly from heterodyne samples via unbiased estimators; for $10^5$ shots, the standard error in variance is $\lesssim 1\%$ and in kurtosis is $\sim 10\%$, sufficient to resolve the predicted linear co-variation under weak nonlinearity. Superconducting circuits admit an alternative: weak continuous monitoring of a cavity subject to a tunable Kerr or cubic drive allows real-time tracking of low-order cumulants; rigidity predicts concomitant drifts in $\langle x^2\rangle$ whenever excess kurtosis appears in the filtered record. These signatures can double as diagnostics for unintended anharmonicities, enabling calibration of devices by enforcing the ``no independent knob'' constraint.

The same hierarchy coupling should manifest in any bosonic platform that supports weak nonlinearity. In mechanical resonators and hybrid spin-oscillator systems, measuring correlations of quadrature amplitudes or spin projections can reveal the same co-variation between variance and kurtosis predicted here. Cold-atom collective modes and optomechanical cavities already operate in regimes where cubic nonlinearities are measurable; in these systems, observing the correlated drift of low- and high-order moments would constitute a direct experimental validation of analytic rigidity across disparate physical domains.

\noindent\textit{Broader Implications -- }
This work recasts several pillars of continuous-variable quantum information. In resource-theoretic language, the Gaussian manifold defines the set of free operations, and rigidity forbids leaving this set without simultaneously altering the Gaussian statistics. Monotones based on skewness, kurtosis, or higher cumulants are therefore not independent; they evolve coherently along low-dimensional trajectories dictated by the underlying symplectic symmetry. For bosonic error-correcting codes this means that Hamiltonian-only encodings cannot reshape the distribution tails independently of the bulk. Cat, binomial, and GKP codes \cite{GKP2001,Michael2016,Ofek2016,CampagneIbarcq2020} all rely on higher-moment shaping that, by the theorem, requires measurement or ancilla assistance. In continuous-variable QKD \cite{GrosshansGrangier2002,WeedbrookRMP2012}, variance and kurtosis often appear as separable noise parameters, but rigidity removes that freedom: no Hamiltonian process can vary one without the other, simplifying security analyses and eliminating hypothetical attacks based on unattainable independent cumulant control.

At a deeper level, rigidity delineates the continuous-variable analogue of the Gottesman--Knill boundary. Gaussian (symplectic) operations play the role of Clifford gates: they evolve the first and second moments linearly, preserve the Gaussian manifold, and admit efficient classical simulation. In the discrete setting, the Gottesman--Knill theorem identifies the Clifford group as the maximal set of unitaries whose action on Pauli operators remains linear and therefore simulable by polynomial resources \cite{GottesmanThesis,AaronsonGottesman2004}. Our analytic results reveal the precise continuous-variable counterpart of that boundary. Quadratic Hamiltonians produce generators that terminate at second order in the Moyal expansion, keeping the dynamics within the Gaussian manifold and permitting covariance-matrix simulation. Any smooth non-quadratic Hamiltonian introduces third- and higher-order derivatives, coupling the Gaussian and non-Gaussian sectors and capable of generating Wigner negativity---the continuous-variable analogue of non-Clifford ``magic.'' The appearance of these higher derivatives marks the analytic threshold at which efficient classical simulation breaks down and the Gottesman--Knill theorem ceases to apply. From this perspective, rigidity provides not only the dynamical origin of the Gottesman--Knill divide but also its analytic generalization: the transition from second- to third-order structure in the Moyal expansion defines the universal boundary between Gaussian control and non-Gaussian universality in continuous-variable quantum mechanics. In this sense, analytic rigidity supplies the missing differential statement that underlies both operational and computational boundaries in continuous variables.
It clarifies that the Gaussian–non-Gaussian divide is not merely a technological limitation but a theorem of analytic structure: 
there exists no smooth deformation of Hamiltonian control that crosses it continuously.

\noindent\textit{Outlook -- }
Analytic rigidity carries implications for quantum complexity. Crossing the second-to-third-order threshold in the Moyal expansion corresponds to the onset of computational “magic” in the continuous-variable setting—the point where classical simulation based on covariance matrices ceases to apply. The degree of derivative coupling can thus serve as a quantitative analogue of stabilizer-rank monotones used in discrete systems, linking the geometry of Hamiltonian flows to the resources required for quantum advantage.

Our results also naturally raise several  several questions. Can similar rigidity principles appear in discrete or hybrid systems governed by smooth group actions on probability simplices? Is there a resource-theoretic metric that measures the curvature of the Gaussian manifold directly from experimental data? And can engineered measurement-based feedback circumvent analytic rigidity by introducing non-smooth dynamics? Pursuing these questions will clarify how the analytic structure of quantum mechanics delineates not only what can be simulated but also what can be physically realized.

This work was supported by Los Alamos National
Laboratory LDRD program grant 20230865PRD3.

\end{document}